\documentclass[lettersize,journal]{IEEEtran}
\usepackage{caption}
\usepackage{amsthm}
\usepackage{amsmath,amssymb,amsfonts}
\usepackage{esint}
\usepackage{bm}
\usepackage{upgreek}

\usepackage[linesnumbered,ruled,vlined]{algorithm2e}
\usepackage{array}
\usepackage{textcomp}
\usepackage{subfig}
\usepackage{stfloats}
\usepackage{url}
\usepackage{verbatim}
\usepackage{graphicx}
\usepackage{makecell}
\usepackage{multirow}
\usepackage{xcolor}
\usepackage{cite}
\usepackage{booktabs}
\usepackage{diagbox}
\usepackage{booktabs}
\usepackage{siunitx}
\newtheorem{proposition}{Proposition}
\newtheorem{definition}{Definition}

\theoremstyle{remark}
\allowdisplaybreaks

\newcommand{\figsubref}[2]{Fig.~\ref{#1}\subref{#2}}

\usepackage{etoolbox}
\makeatletter
\newcommand{\patchwarning}[1]{\typeout{*** PATCH WARNING: #1 ***}}

\patchcmd{\@maketitle}
  {\vskip 1.0em\par}
  {\vskip 0.1em\par}
  {}{\patchwarning{Pattern '\string\vskip 1.0em\string\par' not found in \string\@maketitle}}

\patchcmd{\@maketitle}
  {\par \addvspace {0.5\baselineskip }}
  {\par \addvspace {-1.6\baselineskip }}
  {}{\patchwarning{Pattern '\string\par \string\addvspace\space\{0.5\string\baselineskip\space\}' not found in \string\@maketitle}}
\makeatother
\begin{document}

\title{Learning-Based Beamforming for Energy Efficiency of Continuous Aperture Array Systems}

%\author{IEEE Publication Technology,~\IEEEmembership{Staff,~IEEE,}
        % <-this % stops a space
%\thanks{This paper was produced by the IEEE Publication Technology Group. They are in Piscataway, NJ.}% <-this % stops a space
%\thanks{Manuscript received April 19, 2021; revised August 16, 2021.}}

\author{Shiyong~Chen,~\IEEEmembership{Student Member,~IEEE,} Jia~Guo,~\IEEEmembership{Member,~IEEE,}
        and Shengqian~Han,~\IEEEmembership{Senior Member,~IEEE}%

\thanks{Shiyong Chen is with the School of Electronics and Information Engineering, Beihang University, Beijing 100191, China (email: shiyongchen@buaa.edu.cn).}
\thanks{Jia Guo is with the School of Electronics and Information Engineering, Beihang University, Beijing 100191,China (email: guojia@buaa.edu.cn).}
\thanks{Shengqian Han is with the School of Electronics and Information Engineering, Beihang University, Beijing 100191, China (email: sqhan@buaa.edu.cn).}
%\thanks{Manuscript received April 19, 2025; revised August 16, 2025.}
}

% The paper headers
%\markboth{Journal of \LaTeX\ Class Files,~Vol.~14, No.~8, August~2021}%
%Shell \MakeLowercase{\textit{et al.}}: A Sample Article Using IEEEtran.cls for IEEE Journals}

%\IEEEpubid{0000--0000/00\$00.00~\copyright~2021 IEEE}
% Remember, if you use this, you must call \IEEEpubidadjcol in the second
% column for its text to clear the IEEEpubid mark.

\maketitle
\begin{abstract}
This paper jointly optimizes the base-station (BS) continuous aperture array (CAPA) dimensions and beamforming functions to maximize energy efficiency (EE) of the downlink multiuser multi-CAPA system, where both the BS and the users are equipped with CAPAs. Since the beamforming functions are continuous current distribution over the BS CAPA, the resulting EE maximization problem is a nontrivial functional optimization problem that couples aperture sizing and beamforming design. To address this challenge, we propose a cascaded network architecture consisting of a graph neural network (GNN) and a functional-gradient based implicit neural representation (FGB-INR) to learn the BS CAPA dimensions and beamforming functions, respectively. Both networks exploit the permutation equivariance of the optimal optimization policy, and the update equations of FGB-INR are designed according to the functional-gradient structure of the EE objective. Simulation results show that the proposed method approaches the EE of the numerical method while substantially reducing inference latency. They also demonstrates that the functional-gradient structure in FGB-INR improves EE while reducing sample complexity and training time.
\end{abstract}

\begin{IEEEkeywords}
Continuous aperture array, beamforming function, implicit neural representation (INR), energy efficiency (EE).
\end{IEEEkeywords}

\section{Introduction}
Continuous aperture arrays (CAPAs) have emerged as a potential antenna architecture for improving spectral and energy efficiency (EE)~\cite{Continuous_aperture_arrays}. Unlike conventional spatially discrete arrays (SPDAs), CAPA uses a nearly continuous electromagnetic aperture to generate controllable current distributions over a finite physical surface, thereby enabling more finely shaped~beamforming.

CAPA beamforming is defined by a continuous current distribution over the aperture. The continuous nature make the beamforming optimization in CAPA systems a functional optimization problem that cannot be solved directly by conventional convex optimization methods~\cite{Beamforming_Optimization}.

To address this challenge, several methods have been proposed for CAPA beamforming. In~\cite{Wavenumber, On_the_SE, Pattern_Division}, the continuous channel and beamforming functions are expanded over finite Fourier bases, which transforms the functional optimization problem into a finite-dimensional one. Although mathematically tractable, this Fourier-based strategy degrades performance because the expansion introduces approximation errors. To avoid such loss, functional weighted minimum mean-squared error (WMMSE) algorithms have been developed~\cite{Beamforming_Design,Functional_WMMSE}. For a single-user multi-CAPA system, in which a CAPA-equipped base station (BS) serves one CAPA-equipped user, a functional WMMSE algorithm was proposed with closed-form update equations derived via the calculus of variations~\cite{Beamforming_Design}. In~\cite{Functional_WMMSE}, this algorithm was extended to a multiuser multi-CAPA system, where a CAPA-equipped BS serves multiple CAPA-equipped users. The corresponding closed-form updates are derived by transforming the first-order optimality conditions from a parameterized form into a functional form. While these functional algorithms avoid the approximation errors by directly optimizing the continuous beamforming functions, their iterative update equations incur substantial computational cost. To reduce online computational complexity, learning-based methods have also been investigated in CAPA systems~\cite{Multi_User_CAPA,Implicit_Neural_CAPA}. For example, in~\cite{Multi_User_CAPA} the optimal beamforming function was represented as a weighted sum of channel functions, and the resulting coefficients was learned by a graph neural network (GNN). In~\cite{Implicit_Neural_CAPA}, the continuous CAPA beamforming functions were directly parameterized by an implicit neural representation~(INR). 

Most numerical and learn-based methods focus on SE or sum-rate maximization, while EE maximization for CAPA systems remains much less explored. In~\cite{Multi_Group_Multicast}, Dinkelbach's method was applied to optimize the beamforming function for EE maximization. Nevertheless, it uses a simplified power consumption model, and the resulting numerical algorithm is computationally expensive.

This paper studies the EE maximization problem in a multiuser CAPA system, where the BS CAPA dimensions and beamforming functions are jointly optimized. This problem is nontrivial as the beamforming functions are defined over the BS aperture and thus are coupled with the BS CAPA dimensions. To address this challenge, we propose a cascaded network architecture consisting of a GNN and a functional-gradient based implicit neural representation (FGB-INR), to learn BS CAPA dimensions and beamforming functions, respectively. Both networks are designed to exploit the permutation equivariance (PE) of
the optimal policy. Moreover, the update equations of FGB-INR are motivated by the structure of functional-gradient iteration.  Simulation results show that the proposed method approaches the EE of the numerical method with substantially reduced inference latency, and ablation study demonstrates that the functional-gradient structure in FGB-INR improves EE and reduces sample complexity and training time.

\section{System Model and Problem Formulation}
% \begin{figure}%[htbp]
% \centering
% \includegraphics[width=0.4\textwidth]{HMIMO4.pdf}
% \caption{Illustration of CAPA hardware implementations.}  \label{CAPA_implement}
% % \vspace{-0.5cm}
% \end{figure}

\subsection{System Model}
Consider a downlink multiuser multi-CAPA system, where a CAPA-equipped BS serves $K$ CAPA-equipped users. In a Cartesian coordinate system, the rectangular BS aperture is placed on the $xy$-plane and centered at the origin, with side lengths $L_{\mathrm B}^x$ and $L_{\mathrm B}^y$ along the $x$- and $y$-axes, respectively. A point on the BS aperture is denoted by $\mathbf{s}=[s_x, s_y, 0]^{\mathsf{T}}$, and the set of all such points is denoted by $\mathcal{S}_\mathrm{B}$. The $k$-th user's CAPA is centered at $\mathbf{r}_{o}^k$ and has side lengths $L_{\mathrm{U}}^{x}$ and $L_{\mathrm{U}}^{y}$. To ensure polarization alignment, each user CAPA is assumed to be parallel to the BS~CAPA~\cite{Beamforming_Design}.

\subsubsection{Sum Rate Model}
Before presenting the achievable sum rate, we first define the inverse of a continuous kernel.

\begin{definition}\label{Definition:Inverse of continuous kernel}
For a continuous kernel $G(\mathbf r, \mathbf s)$,\footnote{In functional analysis, a two-variable function used in an integral operator is referred to as the kernel of that operator.} a kernel $G^{-1}(\mathbf z, \mathbf r)$ is defined as the inverse of $G(\mathbf r, \mathbf s)$ if it satisfies~\cite{Optimal_Beamforming}
\begin{equation}\label{inverse of function}
 \textstyle\int_{\mathcal S} G^{-1}(\mathbf z,\mathbf r)G(\mathbf r,\mathbf s)\mathrm d\mathbf r = \delta(\mathbf z-\mathbf s),
\end{equation}
for all $\mathbf{r}, \mathbf{s}, \mathbf{z}\in\mathcal{S}$, where $\delta(\cdot)$ is the Dirac delta function.
\end{definition}

With Definition~\ref{Definition:Inverse of continuous kernel}, the achievable sum rate of the considered multiuser multi-CAPA system is given by~\cite{Multi_User_CAPA}

\begin{equation}\label{eq:sum_rate}
\textstyle R_{\mathrm{sum}}=\sum_{k=1}^{K} \log\det\big(\mathbf{I}_{d} +\mathbf{Q}_k\big),
\end{equation}
where
\begin{subequations}\label{sum_rate_model}
\begin{align}
&\textstyle\mathbf{Q}_k=
 \iint_{\mathcal{S}_{\mathrm{U}}}
\mathbf{a}_{k,k}^{\mathsf{H}}(\mathbf{r}_1)\,
\mathrm{J}_{{k}}^{-1}(\mathbf{r}_1,\mathbf{r}_2)\,
\mathbf{a}_{k,k}(\mathbf{r}_2)\,
\mathrm{d}\mathbf{r}_2\,\mathrm{d}\mathbf{r}_1,\label{eq_Q}\\
&\textstyle\mathrm{J}_{{k}}(\mathbf{r}_1,\mathbf{r}_2)= \sum\limits_{i=1,i\neq k}^K \mathbf{a}_{k,i}(\mathbf{r}_1)\,\mathbf{a}_{k,i}^{\mathsf{H}}(\mathbf{r}_2) 
+{\sigma_n^{2}}\delta(\mathbf{r}_1-\mathbf{r}_2),\label{eq_J} \\
&\textstyle \mathbf{a}_{k,i}(\mathbf{r}) 
=  \int_{\mathcal{S}_{\mathrm{B}}}h_k(\mathbf{r},\mathbf{s})\,\mathbf{v}_i(\mathbf{s})\,\mathrm{d}\mathbf{s},\label{eq_a}
\end{align}
\end{subequations} and $\mathcal{S}_{\mathrm{U}}=\bigcup_{k=1}^K{\mathcal{S}_{\mathrm{U}}^k}$.
$\mathbf{v}_{k}(\mathbf{s})\in\mathbb{C}^{1\times d}$ denotes the continuous beamforming function for user $k$ with $d$ data streams, and $\sigma_n^2$ is the noise variance. $h_k(\mathbf{r}, \mathbf{s})$ denotes the continuous channel kernel from point $\mathbf{s}$ on the BS CAPA to point $\mathbf{r}$ on the $k$-th user CAPA. As widely used in~\cite{Beamforming_Design, Multi_User_CAPA}, we adopt a line-of-sight propagation model with a uni-polarized CAPA aligned along the $y$-axis. The channel kernel is modeled as
\begin{equation} \label{Green's function}
\!\!\!\textstyle h_k(\mathbf{r}, \mathbf{s})\!=\!\mathbf{u}^{\mathsf{T}}\frac{-j \eta e^{-j \frac{2\pi}{\lambda} \| \mathbf{r} - \mathbf{s} \|}}{2\lambda \| \mathbf{r} - \mathbf{s} \|} 
\Big(\!\mathbf{I}_3-\frac{(\mathbf{r} - \mathbf{s})(\mathbf{r} - \mathbf{s})^{\mathsf T}}{\| \mathbf{r} - \mathbf{s} \|^2} \!\Big)\mathbf{u},
\end{equation}
where $\mathbf{r}\in\mathcal{S}_{\mathrm{U}}^k$, $\mathbf{s}\in \mathcal{S}_{\mathrm{B}}$, $\mathbf{u} = [0, 1, 0]^{\mathsf{T}}$ denotes the unit polarization vector, $\eta$ is the intrinsic impedance of free space, $\lambda$ is the wavelength, and $\mathbf{I}_3$ denotes the $3\times 3$ identity matrix.

\subsubsection{Power Consumption Model}
The total power consumption of the BS transmitter consists of the RF-chain power, BS CAPA surface power, and power-amplifier power~\cite{Continuous_aperture_arrays}. For a BS with $N_{\mathrm{RF}}$ RF chains, its total power consumption is~\cite{Embracing_Reconfigurable_Antennas}
\begin{equation}\label{eq:sum power}
P_{\mathrm{tot}}
=P_{\mathrm{LO}}
+
N_{\mathrm{RF}}\big(2P_{\mathrm{DAC}}+P_{\mathrm{RF}}\big)
+
P_{\mathrm{CAPA}}
+
\frac{P_{\mathrm{rad}}}{\xi},
\end{equation}
where $P_{\mathrm{CAPA}}$ denotes the power consumed by the BS CAPA surface, $P_{\mathrm{rad}}$ is the radiated power, $\xi$ is the power-amplifier efficiency, and $P_{\mathrm{LO}}$, $P_{\mathrm{DAC}}$, and $P_{\mathrm{RF}}$ denote the power consumed by the local oscillator, digital-to-analog converters, and each RF chain, respectively.

The power consumed by the CAPA surface is determined by the field-programmable gate array (FPGA) control board, drive circuits, and metasurface unit cells~\cite{HMIMO_How_Many}. The FPGA control board consumes an approximately constant power, whereas the drive circuits and metasurface unit cells consume power that scales approximately linearly with the number of integrated elements. Since this number scales linearly with CAPA area, $P_{\mathrm{CAPA}}$ is modeled~as
\begin{equation} \label{eq:CAPA power}
P_{\mathrm{CAPA}}
=
P_{\mathrm{CB}}+\alpha L_{\mathrm{B}}^xL_{\mathrm{B}}^y,
\end{equation}
where $P_{\mathrm{CB}}$ denotes the constant power consumed by the FPGA control board, and $\alpha$ is the CAPA surface power consumption coefficient, i.e., the power consumed per unit aperture area.

To characterize the radiated power, we invoke Poynting's theorem, which states that the total radiated power equals the net outward electromagnetic power flow through a closed surface enclosing the radiator~\cite{How_an_antenna_launches}. To compute the radiated power, we enclose the CAPA with a closed surface $\mathcal{S}_{\mathrm c}$ and define the radiation kernel from the CAPA to this surface as
\begin{equation}\label{eq:radiated power}
\textstyle\mathbf G(\mathbf r,\mathbf s)=\frac{-j \eta e^{-j \frac{2\pi}{\lambda} \| \mathbf{r} - \mathbf{s} \|}}{2\lambda \| \mathbf{r} - \mathbf{s} \|}\left(\mathbf I_3-
\frac{(\mathbf r-\mathbf s)(\mathbf r-\mathbf s)^{\mathsf T}}{\|\mathbf r-\mathbf s\|^2}\right)\mathbf u,\,\, \mathbf r\in\mathcal S_{\mathrm c}.
\end{equation}
With~\eqref{eq:radiated power}, the electric field on $\mathcal{S}_{\mathrm c}$ is
\begin{equation}\label{eq:e_c}
\mathbf a_{\mathrm c}(\mathbf r)
=
\sum_{k=1}^{K}
\int_{\mathcal S_{\mathrm B}}
\mathbf G(\mathbf r,\mathbf s)\,
\mathbf v_k(\mathbf s)\mathbf x_k^{\mathsf T}
\,\mathrm d\mathbf s,\,\,\mathbf r\in\mathcal S_{\mathrm c},\,\mathbf s\in\mathcal S_{\mathrm B},
\end{equation}
where $\mathbf{x}_k=[x_{1}^k, \ldots, x_{d}^k]\in\mathbb{C}^{1\times d}$ contains the data symbols for user $k$, with $\mathbb{E}\{|x_i^k|^2\} = 1$. When $\mathcal{S}_{\mathrm c}$ lies in the far-field region, the radiated field is transverse to the propagation direction, and the radiated power is given by~\cite{Pattern_Division}
\begin{equation}\label{eq:radiated_power}
P_{\mathrm{rad}}
=
\mathbb E_{x_i^k}\!\left[
\int_{\mathcal S_{\mathrm c}}
\frac{\|\mathbf a_{\mathrm t}(\mathbf r)\|^2}{2\eta}
\,\mathrm d\mathbf{r}
\right],\,\,\mathbf r\in\mathcal S_{\mathrm c}
\end{equation}
where $\mathbf a_{\mathrm t}(\mathbf r)$ denotes the tangential component of $\mathbf a_{\mathrm c}(\mathbf r)$ on $\mathcal S_{\mathrm c}$, obtained by projecting $\mathbf a_{\mathrm c}(\mathbf r)$ onto the tangent plane of $\mathcal S_{\mathrm c}$.

\subsection{Problem Formulation}
Based on the above models, the joint optimization problem for EE maximization is formulated as
\begin{subequations}\label{P1:maximization of EE}
\begin{align}
\max_{L_{\mathrm{B}}^{x},\,L_{\mathrm{B}}^{y},\,\{\mathbf{v}_k(\mathbf{s})\}_{k=1}^{K}}\quad
& \frac{R_{\mathrm{sum}}}{P_{\mathrm{tot}}}
\label{P1:EE}\\
\mathrm{s.t.}\quad
& L_{\mathrm{B}}^{\min} \leq L_{\mathrm{B}}^{x} \leq L_{\mathrm{B}}^{\max}, \label{P1:Lmax_x}\\
& L_{\mathrm{B}}^{\min} \leq L_{\mathrm{B}}^{y} \leq L_{\mathrm{B}}^{\max}, \label{P1:Lamx_y}\\
&\sum\limits_{k=1}^K\| \mathbf{v}_k(\mathbf{s}) \|^2 \leq \mathrm{I}_{\max},\, \forall \mathbf{s} \in{\mathcal{S}_\mathrm{B}}, \label{P1:PeakConstraint}\\
& \eqref{eq:sum_rate},\eqref{sum_rate_model},\eqref{eq:sum power},\eqref{eq:CAPA power},\eqref{eq:e_c},\eqref{eq:radiated_power},\nonumber
\end{align}
\end{subequations}
where $L_{\mathrm{B}}^{\min}$ and $L_{\mathrm{B}}^{\max}$ are the minimum and maximum side lengths of the BS CAPA, respectively, and $\mathrm{I}_{\max}$ denotes the peak transmit current budget~\cite{Implicit_Neural_CAPA}.

\section{Joint Optimization Method}
In this section, we design a cascaded network architecture, consisting of a GNN followed by FGB-INR, to learn the optimal policy of problem~\eqref{P1:maximization of EE}. We first establish the PE property of this policy, and then exploit  this property to design the GNN and FGB-INR, which are used to learn the CAPA dimension and parameterize the continuous beamforming function, respectively.

\subsection{Permutation Property of the Optimal Policy}
Since the beamforming functions are defined on \(\mathcal{S}_{\mathrm B}\), their domain is coupled with the BS CAPA dimensions, i.e., \(L_{\mathrm B}^x\) and \(L_{\mathrm B}^y\). To decouple the coordinate domain
from the aperture dimensions, we introduce a normalized reference set \(\tilde{\mathcal S}_{\mathrm B}\), where \(\tilde{\mathbf s}=[\tilde s_x,\tilde s_y,0]^{\mathsf T}\in\tilde{\mathcal S}_{\mathrm B}\) with \(-\frac{1}{2}\leq \tilde s_x,\tilde s_y\leq \frac{1}{2}\). The corresponding physical coordinate \(\mathbf{s}\in\mathcal{S}_\mathrm{B}\) is obtained as 
\begin{equation}\label{eq:BS_mapping}
\mathbf s
=
\mathbf T(\tilde{\mathbf s})
=
[L_{\mathrm B}^x\tilde s_x,\,
L_{\mathrm B}^y\tilde s_y,\,
0]^{\mathsf T},
\qquad
\tilde{\mathbf s}\in\tilde{\mathcal S}_{\mathrm B}.
\end{equation}

With this mapping, beamforming functions associated with different aperture sizes can be described on the same normalized domain. Let $\mathbf{R}_o=[\mathbf{r}_o^{1,\mathsf T},\cdots,\mathbf{r}_o^{K,\mathsf T}]^{\mathsf T}\in\mathbb{R}^{K\times 3}$ denote the user geometry, and $\mathbf{V}(\mathbf{s})=[\mathbf{v}_{1}^{\mathsf T}(\mathbf{s}),\ldots,\mathbf{v}_{K}^{\mathsf T}(\mathbf{s})]^{\mathsf T}\in\mathbb{C}^{K\times d}$ denote the beamforming function matrix. Since the optimal $L_{\mathrm{B}}^{x}$, $L_{\mathrm{B}}^{y}$ and $\mathbf{V}(\mathbf{s})$ are determined by the user geometry $\mathbf{R}_o$~\cite{Implicit_Neural_CAPA}, composing $\mathbf{V}(\mathbf{s})$ with~\eqref{eq:BS_mapping} yields the optimal policy as
\begin{equation}\label{optimal_policy}
\big(L_{\mathrm{B}}^{x,\star},L_{\mathrm{B}}^{y,\star},\mathbf{V}^{\star}(\mathbf T(\tilde{\mathbf s}))\big)
=
\mathcal{F}(\tilde{\mathbf{s}},\mathbf{R}_o),
\qquad \tilde{\mathbf{s}}\in \tilde{\mathcal S}_{\mathrm B},
\end{equation}
where $L_{\mathrm{B}}^{x,\star}$ and $L_{\mathrm{B}}^{y,\star}$ are the optimal BS CAPA dimensions, and $\mathbf{V}^{\star}(\mathbf T(\tilde{\mathbf s}))$ is the optimal beamforming function matrix for input $(\tilde{\mathbf{s}},\mathbf{R}_o)$.

The following proposition states that the optimal policy~\eqref{optimal_policy} satisfies the joint permutation invariance (PI) and PE (PIPE) property with respect to the user dimension.

\begin{proposition}\label{proposition:PIPE}
When the input of the policy is permuted as $\boldsymbol{\Pi}^{\mathsf{T}}\mathbf{R}_o$, the BS CAPA dimensions $L_{\mathrm{B}}^{x,\star}$ and $L_{\mathrm{B}}^{y,\star}$ together with $\boldsymbol{\Pi}^{\mathsf{T}}\mathbf{V}^{\star}(\mathbf{s})$ remain optimal, i.e.

\begin{equation}\label{1D-PE}
\textstyle\big(L_{\mathrm{B}}^{x,\star},L_{\mathrm{B}}^{y,\star},\boldsymbol{\Pi}^{\mathsf{T}}\mathbf{V}^{\star}(\mathbf{T}(\tilde{\mathbf s}))\big)=\mathcal{F}\big(\tilde{\mathbf s}, \boldsymbol{\Pi}^{\mathsf{T}}\mathbf{R}_{o}\big),
\end{equation}
where $\boldsymbol{\Pi}\!\in\!\mathbb{R}^{K\times K}$ is a permutation matrix on the user~indices.
\end{proposition}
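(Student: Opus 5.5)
The plan is to show that the objective of problem~\eqref{P1:maximization of EE} and its feasible set are unchanged when the user indices and the beamforming functions are relabeled by the same permutation. Optimality of the relabeled solution for the permuted input then follows immediately. Let $\pi$ be the permutation of $\{1,\dots,K\}$ associated with $\boldsymbol{\Pi}$, so that $\boldsymbol{\Pi}^{\mathsf T}\mathbf R_o$ places user $\pi(k)$ in row $k$. For a fixed pair $(L_{\mathrm B}^x,L_{\mathrm B}^y)$ and beamforming matrix $\mathbf V(\mathbf s)$, define the EE value $\mathrm{EE}(L_{\mathrm B}^x,L_{\mathrm B}^y,\mathbf V;\mathbf R_o)=R_{\mathrm{sum}}/P_{\mathrm{tot}}$. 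The key identity to establish is
\begin{equation*}
\mathrm{EE}\big(L_{\mathrm B}^x,L_{\mathrm B}^y,\boldsymbol{\Pi}^{\mathsf T}\mathbf V;\boldsymbol{\Pi}^{\mathsf T}\mathbf R_o\big)=\mathrm{EE}\big(L_{\mathrm B}^x,L_{\mathrm B}^y,\mathbf V;\mathbf R_o\big)
\end{equation*}
for every feasible triple.

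\textbf{Step 1 (sum rate).} The channel kernel $h_k(\mathbf r,\mathbf s)$ in~\eqref{Green's function} depends on user $k$ only through $\mathcal S_{\mathrm U}^k$, which is determined by $\mathbf r_o^k$ and the common side lengths. After permutation, the new $k$-th user therefore has kernel $h_{\pi(k)}$ and beamformer $\mathbf v_{\pi(k)}$. By~\eqref{eq_a}, the new $\mathbf a_{k,i}$ equals the old $\mathbf a_{\pi(k),\pi(i)}$. The sum over $i\neq k$ in~\eqref{eq_J} is a sum over all other users, so it is simply reindexed; hence the new $\mathrm J_k$ equals the old $\mathrm J_{\pi(k)}$, with the same integration domain. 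Substituting into~\eqref{eq_Q} gives that the new $\mathbf Q_k$ equals the old $\mathbf Q_{\pi(k)}$. Since~\eqref{eq:sum_rate} sums over all $k$, $R_{\mathrm{sum}}$ is unchanged. The union $\mathcal S_{\mathrm U}=\bigcup_k\mathcal S_{\mathrm U}^k$ is also invariant, and so is the kernel inverse in Definition~\ref{Definition:Inverse of continuous kernel}, because the kernel itself is unchanged as a function.

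\textbf{Step 2 (power and constraints).} In~\eqref{eq:CAPA power} and~\eqref{eq:sum power}, every term other than $P_{\mathrm{rad}}$ depends only on the dimensions and on constants, so these terms are invariant. For $P_{\mathrm{rad}}$, expanding $\|\mathbf a_{\mathrm t}\|^2$ from~\eqref{eq:e_c} and using the fact that the symbols are i.i.d. with unit power gives
\begin{equation*}
P_{\mathrm{rad}}=\sum_{k}\int_{\mathcal S_{\mathrm c}}\frac{\|\mathbf P_{\mathrm t}\int_{\mathcal S_{\mathrm B}}\mathbf G\,\mathbf v_k\,\mathrm d\mathbf s\|_F^2}{2\eta}\,\mathrm d\mathbf r,
\end{equation*}
where $\mathbf P_{\mathrm t}$ is the tangential projection. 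This is a sum over users and is invariant under reindexing. Even without assuming independence across users, $\mathbf a_{\mathrm c}$ is a sum over $k$, so it is unchanged as long as the symbol statistics are exchangeable. The constraints~\eqref{P1:Lmax_x}--\eqref{P1:Lamx_y} do not involve users. The constraint~\eqref{P1:PeakConstraint} is a sum over $k$ and is therefore invariant, so the map $\mathbf V\mapsto\boldsymbol{\Pi}^{\mathsf T}\mathbf V$ is a bijection of the feasible set for fixed dimensions.

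\textbf{Step 3 (conclude).} Suppose some feasible triple achieved a strictly higher EE for input $\boldsymbol{\Pi}^{\mathsf T}\mathbf R_o$ than $(L_{\mathrm B}^{x,\star},L_{\mathrm B}^{y,\star},\boldsymbol{\Pi}^{\mathsf T}\mathbf V^\star)$. Applying $\boldsymbol{\Pi}$ to it would give a feasible triple for $\mathbf R_o$ that beats the optimum, which is a contradiction. Composition with $\mathbf T$ in~\eqref{eq:BS_mapping} commutes with row permutation, which yields~\eqref{1D-PE}.

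The main obstacle is the interference kernel step: verifying that $\mathrm J_k$ and its functional inverse transform consistently, i.e., that the inverse of the permuted kernel is the permuted inverse. It also implicitly requires that the optimum is unique, or that $\mathcal F$ is read as selecting from the set of optimal solutions. I would handle the first point by noting that the kernel is literally the same function, so uniqueness of the inverse in Definition~\ref{Definition:Inverse of continuous kernel} suffices. For the second, I would state the result as set equality of optimal solutions.
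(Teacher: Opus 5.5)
Your proposal is correct and takes the same route as the paper. The paper's (omitted) proof simply appeals to the invariance of the EE objective and the constraints under a joint relabeling of users and beamforming functions, and your Steps 1--3 supply exactly those details: the relabeled $\mathbf a_{k,i}$, $\mathrm J_k$ and $\mathbf Q_k$ coincide with the original ones at indices $\pi(k),\pi(i)$, while $P_{\mathrm{rad}}$ and the peak-current constraint are sums over users. Your caveat is also fair and is not addressed in the paper: writing $\mathcal F$ as a single-valued map tacitly assumes a unique optimum, and otherwise the statement should be read as equality of optimal solution sets.
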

\begin{IEEEproof}
The proof follows from the invariance of the objective and constraints under the corresponding index permutations and is omitted for brevity.
\end{IEEEproof}

Proposition~\ref{proposition:PIPE} shows that the BS CAPA dimensions are PI, whereas the beamforming functions are PE with respect to the user permutations.

\subsection{Design of the Cascaded Network}
The overall cascaded network architecture is illustrated in Fig.~\ref{Cascaded_network_architecture}, where the first module learns $(L_{\mathrm{B}}^{x},L_{\mathrm{B}}^{y})$, and FGB-INR parameterizes $\mathbf{V}(\cdot)$, as detailed in in Sec.~\ref{Design of FGB-INR}.

\begin{figure}%[htbp]
\centering
\includegraphics[width=0.35\textwidth]{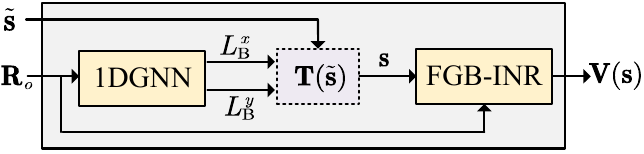}
\caption{Cascaded network architecture.}  \label{Cascaded_network_architecture}
% \vspace{-0.5cm}
\end{figure}

Since the BS CAPA dimensions are determined by the user geometry, the first
module learns the dimension policy
\begin{equation}\label{eq:dimension_policy}
\textstyle\big(L_{\mathrm{B}}^{x},L_{\mathrm{B}}^{y}\big)
=\mathcal{F}_1\big( \boldsymbol{\Pi}^{\mathsf{T}}\mathbf{R}_{o}\big).
\end{equation}
This policy shows that the outputs are permutation invariant with respect to the permutation of the input. To
preserve this property, the first module uses a GNN followed by mean pooling.
The GNN operates on a fully connected user graph with \(K\) vertices and
pairwise edges. For vertex \(k\), the input feature is the user location
\(\mathbf{r}_o^{k}\), and the corresponding output is
\(\mathbf{L}_k\in\mathbb{R}^{2\times 1}\). The BS CAPA dimensions are then
obtained, by averaging the outputs over users indices as
\begin{equation}\label{eq:dimension_pooling}
L_{\mathrm{B}}^{x}=\frac{1}{K}\sum\nolimits_{k=1}^K{\mathbf{L}_{k}}[0],
\quad
L_{\mathrm{B}}^{y}=\frac{1}{K}\sum\nolimits_{k=1}^K{\mathbf{L}_{k}}[1].
\end{equation}

With the learned \((L_{\mathrm{B}}^{x},L_{\mathrm{B}}^{y})\),
the normalized coordinate \(\tilde{\mathbf{s}}\) is mapped to the physical coordinate \(\mathbf{s}\in\mathcal S_{\mathrm B}\) through~\eqref{eq:BS_mapping}. Using $(\mathbf s, \mathbf{R}_{o})$ as input, FGB-INR learns the beamforming policy
\begin{equation}\label{Beamforming policy}
\textstyle\boldsymbol{\Pi}^{\mathsf{T}}\mathbf{V}(\mathbf s)=\mathcal{F}_2\big(\mathbf s, \boldsymbol{\Pi}^{\mathsf{T}}\mathbf{R}_{o}\big), \quad \mathbf{s}\in\mathcal S_{\mathrm B}.
\end{equation}
This beamforming policy maps spatial coordinates to function values and can be parameterized by an INR as~\cite{Implicit_Neural_CAPA}
\begin{equation}\label{eq:BeamINR}
\textstyle\mathbf{V}(\mathbf{s}) = \mathcal{P}_{\theta}(\mathbf{s}, \mathbf{R}_o),\quad\mathbf{s} \in \mathcal{S}_{\mathrm{B}},
\end{equation}
where $\mathcal{P}_{\theta}(\cdot)$ denotes the INR parameterized by $\theta$.

To preserve the PE property in~\eqref{Beamforming policy}, we implement $\mathcal{P}_{\theta}(\cdot)$ using a GNN. This GNN is defined on the same full connected user graph but uses different node features and outputs. Specifically, for vertex $k$, the feature is $(\mathbf{r}_o^k,\mathbf{s})$, and the output is $\mathbf{v}_k(\mathbf{s})$. No features or actions are assigned to the edges.

A conventional GNN updates vertex representation by aggregating information from neighboring vertices and combining it with the vertex's own representation. For user vertex $k$, the hidden representation in the $(l+1)$-th layer, ${\mathbf{d}}_{k}^{(l+1)}(\mathbf{s}) = [d_{k,1}^{(l+1)}(\mathbf{s}), \dots, d_{k,C_{l+1}}^{(l+1)}(\mathbf{s})]^\mathsf{T}$, is updated as
\begin{equation}\label{eq:gnn_update}
\textstyle\mathbf{d}_{k}^{(l+1)}(\mathbf{s})\!=\!\sigma \Big(\mathbf{S}^{(l)}\mathbf{d}_{k}^{(l)}(\mathbf{s})+{{\mathbf{W}^{(l)}\sum\nolimits_{i=1,i\ne k}^K{\mathbf{d}_{i}^{(l)}}(\mathbf{s})}} \Big),
\end{equation}	
where $C_l$ is the representation dimension in the $l$-th layer, 
${\bf S}^{(l)}, {\bf W}^{(l)}\in\mathbb{C}^{C_{l+1}\times C_{l}}$ are trainable parameters, and $\sigma(\cdot)$ is an element-wise activation function. The initial and final hidden representations, ${\mathbf{d}}_{k}^{(0)}(\mathbf{s})$ and ${\mathbf{d}}_{k}^{(L)}(\mathbf{s})$, correspond to the vertex features and outputs, respectively, where $L$ is the number of the GNN layers.

\subsection{Design of FGB-INR}\label{Design of FGB-INR}
As analyzed in~\cite{Gradient_GNN}, the gradient-iteration equation for SPDA beamforming can guide the design of GNN update equation. We therefore derive the functional gradient of the EE objective via variational calculus and use the resulting iteration to design the update equations of FGB-INR. For the $k$-th user, the functional-gradient iteration with respect to $\mathbf{v}_k(\mathbf{s})$ is written as
\begin{equation}\label{eq:functional iteration equation}
\textstyle\mathbf{v}^{(l+1)}_k(\mathbf{s})=\mathbf{v}^{(l)}_k(\mathbf{s})+\eta\frac{\delta \mathcal{J}}{\delta \mathbf{v}^{(l)}_k}(\mathbf{s}),\quad \mathbf{s} \in \mathcal{S}_{\mathrm{B}},
\end{equation}
where $\mathbf{v}^{(l)}_k(\mathbf{s})$ denotes the beamforming function at the $l$-th iteration, $\eta>0$ is the step size, $\mathcal{J}$ is the objective in~\eqref{P1:EE}, and $\frac{\delta \mathcal{J}}{\delta \mathbf{v}^{(l)}_k}(\mathbf{s})$ denotes the functional gradient evaluated at $\mathbf{v}^{(l)}_k(\mathbf{s})$. This functional gradient is derived as
\begin{equation}\label{eq:functional gradient equation}
\begin{split}\raisetag{3.5ex}
\textstyle\frac{\delta \mathcal{J}}{\delta \mathbf{v}^{(l)}_k}(\mathbf{s})=&\textstyle\int_{\mathcal{S} _{\mathrm{U}}}{\!h_{k}^{\mathsf{H}}(\mathbf{r},\mathbf{s})\bm{\upalpha}^{(l)} _k\left( \mathbf{r} \right) \mathrm{d}\mathbf{r}}+\int_{\mathcal{S}_{\mathrm{c}}}{\!h_{c}^{}(\mathbf{r},\mathbf{s})\bm{\upvarphi}^{(l)} _k\left( \mathbf{r} \right)}\mathrm{d}\mathbf{r}\\
    &\textstyle+\sum_{i=1,i\ne k}^K{\int_{\mathcal{S} _{\mathrm{U}}}{\!h_{i}^{\mathsf{H}}(\mathbf{r},\mathbf{s})\bm{\upbeta}^{(l)} _{k,i}\left( \mathbf{r} \right) \mathrm{d}\mathbf{r}}},
\end{split}
\end{equation}
where the functions $\bm{\upalpha}^{(l)}_k\!(\mathbf r)$, 
$\bm{\upbeta}^{(l)}_{k,i}(\mathbf r)$, and 
$\bm{\upvarphi}^{(l)}_k\!(\mathbf r)$ are defined~as
\vspace{-0.4cm}
\begin{subequations}\label{functions}
\begin{align}
&\textstyle\bm{\upalpha}^{(l)}_k(\mathbf r)
=
\frac{1}{P_{\mathrm{tot}}}
\int_{\mathcal S_{\mathrm U}}
\mathrm{J}_{k}^{-1}(\mathbf r,\mathbf r_1)
\mathbf a_{k,k}(\mathbf r_1)
\,\mathrm d\mathbf r_1\,
\mathbf O_{k}^{-1,\mathsf H},
\label{alpha}\\
&\textstyle\bm{\upbeta}^{(l)}_{i,k}(\mathbf r)
=
\frac{-1}{P_{\mathrm{tot}}}
\int_{\mathcal S_{\mathrm U}}
\mathrm{J}_{i}^{-1}(\mathbf r,\mathbf r_1)
\mathbf a_{i,i}(\mathbf r_1)
\,\mathrm d\mathbf r_1\,
\mathbf O_{k}^{-1,\mathsf H}\mathbf P_{i,k},
\label{beta}\\
&\textstyle\bm{\upvarphi}^{(l)}_k(\mathbf r)
=
\frac{-R_{\mathrm{sum}}}{P_{\mathrm{tot}}^{2}}
\int_{\mathcal S_{\mathrm B}}
h_{c}(\mathbf r,\mathbf s)
\mathbf v_k(\mathbf s)
\,\mathrm d\mathbf s\label{varphi},
\end{align}
\end{subequations}
with $\mathbf{P}_{i,k}\!\!=\!\!\iint_{\mathcal{S}_{\mathrm{U}}}\!\!\mathbf{a}_{i,i}^{\mathsf H}(\mathbf{r})\mathrm{J}_{i}^{-1}(\mathbf{r},\mathbf{r}_1)\mathbf{a}_{i,k}(\mathbf{r}_1)\mathrm{d}\mathbf{r}_1\mathrm{d}\mathbf{r}$ and $\mathbf{O}_k=\mathbf{Q}_k+\mathbf{I}_d$.
Substituting~\eqref{eq:functional gradient equation} into~\eqref{eq:functional iteration equation} yields
\begin{equation}
\begin{split}\raisetag{1.35cm}\label{eq:functional iteration equation1}
&\textstyle\mathbf{v}^{(l+1)}_k(\mathbf{s})=\mathbf{v}^{(l)}_k(\mathbf{s})+\eta\int_{\mathcal{S} _{\mathrm{U}}}{\!h_{k}^{\mathsf{H}}(\mathbf{r},\mathbf{s})\bm{\upalpha}^{(l)} _k\left( \mathbf{r} \right) \mathrm{d}\mathbf{r}}+\\
&\textstyle\eta\int_{\mathcal{S}_{\mathrm{c}}}{\!h_{c}^{}(\mathbf{r},\mathbf{s})\bm{\upvarphi}^{(l)} _k\left( \mathbf{r} \right)}\mathrm{d}\mathbf{r}+\!\!\!\!\sum\limits_{i=1,i\ne k}^K\!\!\!\eta{\int_{\mathcal{S} _{\mathrm{U}}}{\!h_{i}^{\mathsf{H}}(\mathbf{r},\mathbf{s})\bm{\upbeta}^{(l)} _{i,k}\left( \mathbf{r} \right) \mathrm{d}\mathbf{r}}}.
\end{split}
\end{equation}

Comparing~\eqref{eq:functional iteration equation1} with~\eqref{eq:gnn_update}, we observe two structural similarities. Both updates include an aggregation term that collects information from other users, and both combine the aggregated information with user-specific terms to update the current user. They also differ in two aspects: the aggregation term in~\eqref{eq:functional iteration equation1} has a different form from that in~\eqref{eq:gnn_update}, and~\eqref{eq:functional iteration equation1} contains three combination terms, whereas~\eqref{eq:gnn_update} contains only one.

This observation motivates a new aggregation and combination design. Replacing $\mathbf{v}^{(l)}_k(\mathbf{s})$ in~\eqref{eq:functional iteration equation1} with the hidden representation $\mathbf{d}^{(l)}_k(\mathbf{s})$ and introducing trainable parameters and an activation function~yields
\begin{equation}
\begin{split}\raisetag{0.5cm}\label{update equation of FGB-INR}
\mathbf{d}^{(l+1)}_k(\mathbf{s})&\textstyle=\sigma\Big(\mathbf{S}_1^{(l)}\mathbf{d}^{(l)}_k(\mathbf{s})+\mathbf{S}_2^{(l)}\int_{\mathcal{S} _{\mathrm{U}}}{\!h_{k}^{\mathsf{H}}(\mathbf{r},\mathbf{s})\mathbf{b}^{(l)}_k\left( \mathbf{r} \right) \mathrm{d}\mathbf{r}}+\\
&\textstyle+\mathbf{W}_1^{(l)}\sum\nolimits_{i=1,i\ne k}^K\!{\int_{\mathcal{S} _{\mathrm{U}}}{\!h_{i}^{\mathsf{H}}(\mathbf{r},\mathbf{s})\mathbf{q}^{(l)}_{k,i}\left( \mathbf{r} \right) \mathrm{d}\mathbf{r}}}\Big),
\end{split}
\end{equation}
where ${\bf S}_1^{(l)}, {\bf S}_2^{(l)}, {\bf W}_1^{(l)}\in\mathbb{C}^{C_{l+1}\times C_l}$ are trainable matrices, and $\mathbf{b}^{(l)}_k(\mathbf r)\in\mathbb{C}^{C_l\times 1}$ and $\mathbf{q}^{(l)}_{k,i}(\mathbf r)\in\mathbb{C}^{C_l\times 1}$ are functions obtained by replacing $\mathbf{v}^{(l)}_k(\mathbf s)$ in~\eqref{alpha} and~\eqref{beta} with $\mathbf{d}^{(l)}_k(\mathbf s)$. The step size $\eta$ is absorbed into the trainable matrices $\mathbf{S}_2^{(l)}$ and $\mathbf{W}_1^{(l)}$. Since the beamforming function is independent of the specific choice of the enclosing surface $\mathcal{S}_{\mathrm{c}}$, the counterpart of $\bm{\upvarphi}^{(l)} _k\left( \mathbf{r} \right)$ in~\eqref{eq:functional iteration equation1} is omitted.

As~\eqref{alpha} and~\eqref{beta} involve kernel and matrix inverses, directly evaluating $\mathbf{b}^{(l)}_{k}(\mathbf{r})$ and $\mathbf{q}^{(l)}_{k,i}(\mathbf{r})$ in~\eqref{update equation of FGB-INR} is computationally expensive. We therefore employ two INRs, $\mathcal{B}_{\theta}(\cdot)$ and $\mathcal{Q}_{\theta}(\cdot)$, to parameterize $\mathbf{b}^{(l)}_{k}(\cdot)$ and $\mathbf{q}^{(l)}_{k,i}(\cdot)$, respectively. 
We first describe the design of $\mathcal{B}_{\theta}(\cdot)$ and the same principle applies to $\mathcal{Q}_{\theta}(\cdot)$. 
Substituting~\eqref{eq_J} into~\eqref{alpha}, we find that $\bm{\upalpha}^{(l)}_k(\mathbf r)$ is a function of $\mathbf{A}^{(l)}_{k}(\mathbf r)=[\mathbf{a}^{(l),\mathsf{T}}_{k,1}(\mathbf r),\cdots,\mathbf{a}^{(l),\mathsf{T}}_{k,K}(\mathbf r)]^{\mathsf H}\in\mathbb{C}^{K\times d}$. 
Then replacing \(\mathbf{v}^{(l)}_i(\mathbf s)\) in~\eqref{alpha} and \(\mathbf{A}^{(l)}_{k}(\mathbf r)\) with \(\mathbf{d}^{(l)}_i(\mathbf s)\), \(\bm{\upalpha}^{(l)}_k(\mathbf r)\) is represented by \(\mathbf{b}^{(l)}_{k}(\mathbf r)\), and
\(\mathbf{A}^{(l)}_{k}(\mathbf r)\) is replaced by $\mathbf{E}^{(l)}_{k}(\mathbf r)=
[\mathbf{e}^{(l),\mathsf{T}}_{k,1}(\mathbf r),\cdots,
\mathbf{e}^{(l),\mathsf{T}}_{k,K}(\mathbf r)]^{\mathsf H}
\in\mathbb{C}^{K\times C_l}$
where $\mathbf{e}^{(l)}_{k,i}(\mathbf r)
=
\int_{\mathcal{S}_{\mathrm B}}
h_k(\mathbf r,\mathbf s)\mathbf{d}^{(l)}_i(\mathbf s)
\,\mathrm d\mathbf s$. Thus, \(\mathbf{b}^{(l)}_{k}(\mathbf r)\) is a function
of \(\mathbf{E}^{(l)}_{k}(\mathbf r)\), which can be parameterized by \(\mathcal{B}_{\theta}(\cdot)\) as  
\begin{equation}\label{eq:B_inr}
\mathbf{B}^{(l)}(\mathbf{r})=\sigma\big( \mathbf{W}_b^{(l)}\mathbf{E}^{(l)}_{k}(\mathbf{r})\big),
\end{equation}
where $\mathbf{B}^{(l)}(\mathbf{r})=[\mathbf{b}_1^{(l)}(\mathbf{r}),\cdots,\mathbf{b}_K^{(l)}(\mathbf{r})]\in\mathbb{C}^{K\times C_l}$, and $\mathbf{W}_b^{(l)}\in\mathbb{C}^{K\times K}$ is a trainable matrix. To further preserve the PE property of FGB-INR, $\mathcal{B}_{\theta}(\cdot)$~satisfies
\begin{equation}\label{beta GNN condition}
\boldsymbol{\Pi}^{\mathsf{T}}\mathbf{B}^{(l)}(\mathbf{r})=\sigma\big( \mathbf{W}_b^{(l)}\left(\boldsymbol{\Pi}^{\mathsf{T}}\mathbf{E}^{(l)}_{k}(\mathbf{r})\right)\big),
\end{equation}
which can be enforced by a parameter-sharing scheme for $\mathbf{W}_b^{(l)}$~\cite{Deep_models_of}.
Using the same design principle as $\mathcal{B}_{\theta}(\cdot)$, $\mathcal{Q}_{\theta}(\cdot)$ is designed~as
\begin{equation}\label{eq:Q_inr}
\mathbf{Q}_k^{(l)}(\mathbf{r})
=
\sigma\big(\mathbf{W}_q^{(l)}\mathbf{E}^{(l)}_{k}(\mathbf{r})\big),
\end{equation}
where $\mathbf{Q}^{(l)}_{k}(\mathbf r)=[\mathbf{q}^{(l),\mathsf{T}}_{k,1}(\mathbf r), \cdots, \mathbf{q}^{(l),\mathsf{T}}_{k,K}(\mathbf r)]^{\mathsf H}\in\mathbb{C}^{K\times C_l}$, and $\mathbf{W}_q^{(l)}\in\mathbb{C}^{K\times K}$ is a trainable matrix that follows the same parameter-sharing scheme as $\mathbf{W}_b^{(l)}$.

\section{Simulation Results} \label{Simulation}
This section evaluates the proposed method and compares it with related baselines.

\subsection{Simulation Setup}\label{Simulation_setup}
We consider a BS CAPA services \(K=3\) CAPA-equipped users. The maximum and minimum side lengths of the BS CAPA are \(L_{\mathrm{B}}^{\max}=2\,\mathrm{m}\) and \(L_{\mathrm{B}}^{\min}=0.1\,\mathrm{m}\), respectively, while each user has a square CAPA with \(L_{\mathrm{U}}^{x}=L_{\mathrm{U}}^{y}=0.5\,\mathrm{m}\). The position of user \(k\), \(\mathbf r_{o}^{k}=[r_{x}^{k},\,r_{y}^{k},\,r_{z}^{k}]^{\mathsf T}\), is uniformly distributed in a region with \(r_{x}^{k},r_{y}^{k}\sim\mathcal{U}[-5,5]\,\mathrm{m}\) and \(r_{z}^{k}\sim\mathcal{U}[20,30]\,\mathrm{m}\). The wavelength is set to \(\lambda=0.125\,\mathrm{m}\), corresponding to a carrier frequency of \(2.4\,\mathrm{GHz}\), and the intrinsic impedance is \(\eta=120\pi\,\Omega\). The number of data streams is \(d=2\) and the number of RF chains is \(N_{\mathrm{RF}}=Kd\). The power consumption parameters are \(P_{\mathrm{LO}}=22.5\,\mathrm{mW}\), \(P_{\mathrm{DAC}}=128\,\mathrm{mW}\), \(P_{\mathrm{RF}}=31.6\,\mathrm{mW}\), \(\alpha=20\,\mathrm{W/m^2}\), and \(P_{\mathrm{CB}}=4.8\,\mathrm{W}\)~\cite{Practical_Measurement_Validation,Embracing_Reconfigurable_Antennas}. The power-amplifier efficiency, peak transmit current budget, and noise variance are set to \(\xi=0.27\), \(\mathrm{I}_{\max}=500\,\text{mA}^2\), and \(\sigma_n^{2}=5.6\times10^{-3}\,\mathrm{V}^2\), respectively~\cite{Implicit_Neural_CAPA}.

The GNN in the first module is configured with five hidden layers of dimensions $\{32, 128, 256, 128, 32\}$, and FGB-INR is configured with six hidden layers of dimensions $\{64, 128, 512, 512, 128, 64\}$. Both networks use Tanh activations at each layer. The cascaded network is trained in an unsupervised manner by minimizing the negative weighted sum of two numerical estimates of the objective in~\eqref{P1:EE}. These two estimates are computed using different integral schemes: one with fixed Gauss--Legendre quadrature points and the other with randomized sampling points~\cite{Implicit_Neural_CAPA}. We generate $500,000$ samples for training and use a separate set of $10,000$ testing samples. The model is trained using the Adam optimizer with an initial learning rate of $10^{-3}$ and a batch size of $8$.

\subsection{Learning Performance}
We compare the learning performance of the proposed method with the following baselines:
\begin{itemize}
    \item \textbf{INR-Beam}: A learning-based method from~\cite{Implicit_Neural_CAPA}, which employs a FNN as an INR to parameterize the beamforming functions while leaving the BS CAPA dimensions fixed at $L_{\mathrm{B}}^{x}=L_{\mathrm{B}}^{y}=L_{\mathrm{B}}^{\max}$.

    \item \textbf{Dink-Beam}: A method that uses Dinkelbach's method to optimize the beamforming functions with fixed BS CAPA~dimensions, i.e., $L_{\mathrm{B}}^{x}=L_{\mathrm{B}}^{y}=L_{\mathrm{B}}^{\max}$~\cite{Multi_Group_Multicast}.

    \item \textbf{Nest-Opt.}: A nested joint optimization method, where the outer loop performs a coarse-to-fine search over the BS CAPA dimension with \(L_{\mathrm{B}}^{x}=L_{\mathrm{B}}^{y}\), while the inner loop optimizes the beamforming functions using Dinkelbach's~method.
\end{itemize}

Since the power-consumption coefficient $\alpha$ of the CAPA surface is implementation-dependent and may vary across practical hardware architectures~\cite{Practical_Measurement_Validation, HMIMO_How_Many}, we first evaluate the EE performance under different values of $\alpha$ in~\figsubref{Performance_comparison_under_different_alf_User}{EE_Vs_alf}. As expected, the EE decreases as $\alpha$ increases because a larger $\alpha$ leads to higher CAPA surface power consumption. \figsubref{Performance_comparison_under_different_alf_User}{EE_Vs_User} shows the EE performance versus the number of users. The EE of all considered methods increases with the number of users, indicating that the CAPA system provides sufficient degrees of freedom to support multiuser multiplexing gains. Across both figures, the proposed method and Nest-Opt., which jointly optimize the BS CAPA dimensions and beamforming functions, consistently outperform INR-Beam and Dink-Beam, which only optimize the beamforming functions. In addition, the proposed method achieves comparable EE with Nest-Opt.

\begin{figure}[h]
\centering 
\vspace{-0.2cm} 
\subfloat[EE versus $\alpha$.]{\includegraphics[width=0.46\linewidth]{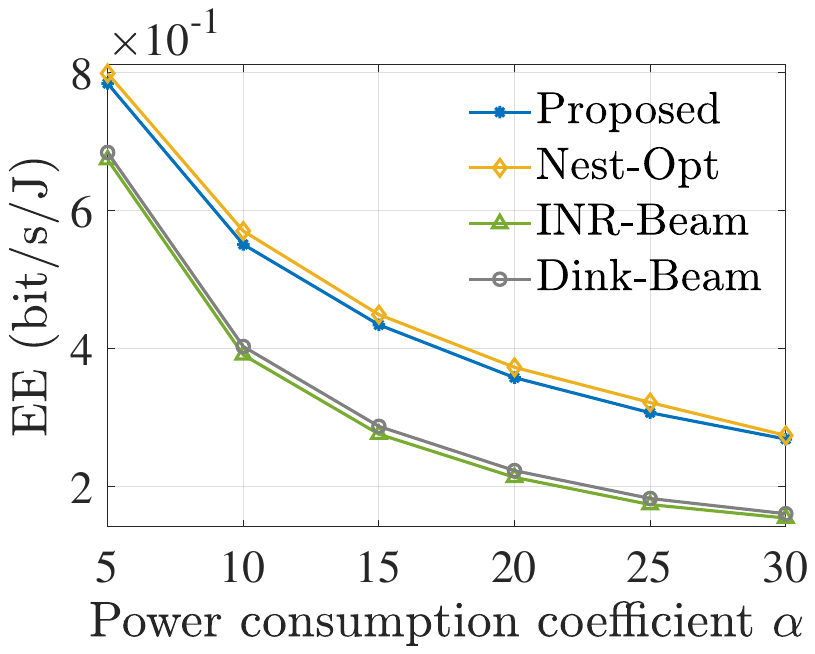} \label{EE_Vs_alf}}
\hfill
\subfloat[EE versus $K$.]{\includegraphics[width=0.45\linewidth]{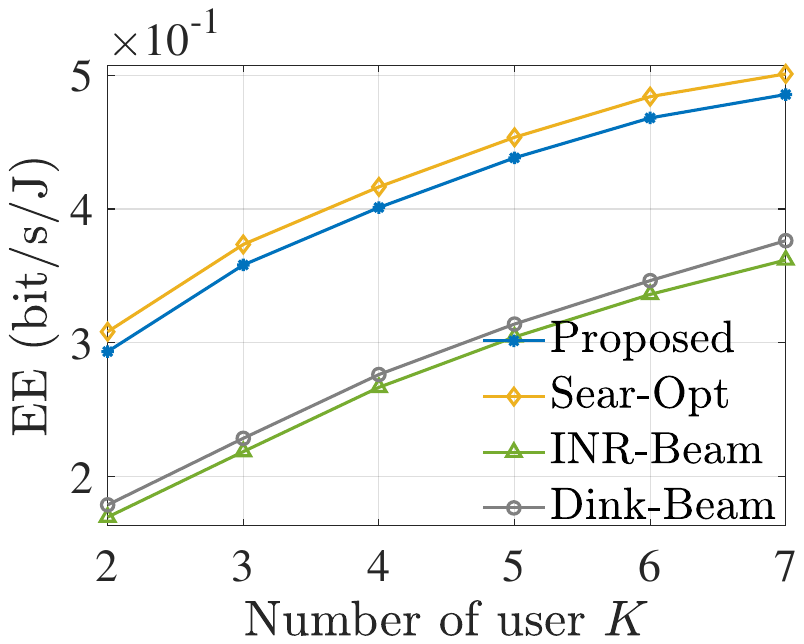}\label{EE_Vs_User}  } 
\caption{Performance comparison under different parameters.} 
\label{Performance_comparison_under_different_alf_User} \vspace{-0.1cm} 
\end{figure}

\subsection{Ablation Study}
In this subsection, we design two ablation variants to validate and quantify the contribution of the two key designs in the proposed method, i.e., the functional-gradient-based update in FGB-INR and the joint PIPE property.

\begin{itemize}
    \item \textbf{VAGNN}: This variant replaces FGB-INR with a conventional GNN using the update equation in~\eqref{eq:gnn_update}. It therefore relies solely on the joint PIPE property and isolates the gain brought by the proposed FGB-INR update.

    \item \textbf{VAFNN}: This variant replaces the networks in both cascaded modules with FNNs. Comparing it with VAGNN isolates the gain brought by the joint PIPE property.
\end{itemize}

Table~\ref{Learning_Performance_Num} shows the learning performance of the considered methods under different numbers of training samples. The performance metric is defined as the ratio of the EE achieved by each learning-based method to Nest-Opt. The proposed method and VAGNN consistently outperform VAFNN across all numbers of training samples, verifying the benefit of exploiting the joint PIPE property. The superior performance of the proposed method over VAGNN demonstrates the the gains provided by the FGB-INR update equations.

\begin{table}[h]
% \vspace{-0.2cm}
\captionsetup{font=small}
\renewcommand{\arraystretch}{1.3}
\centering
\caption{Learning Performance Versus the Number of Training Samples}
\label{Learning_Performance_Num}
\setlength{\tabcolsep}{3pt}
\begin{tabular}{@{}c|*7{c}@{}}
\hline
Number of samples & 5  & 10  & 50  & 100  & 500  & 1000  & 5000 \\
\hline
Proposed (\%) & 69.48 & 81.38 & 90.41 & 90.97 & 92.07 & 92.60 & 95.12 \\
VAGNN (\%)    & 15.87 & 22.45 & 65.27 & 74.33 & 85.08 & 87.83 & 90.62 \\
VAFNN (\%)    & 10.77 & 12.75 & 31.35 & 41.64 & 68.84 & 79.80 & 84.43 \\
\hline
\end{tabular}
% \vspace{-0.2cm}
\end{table}

Table~\ref{Complexity} compares the inference latency and training complexity of the considered methods. For learning-based methods, training complexity is measured by the number of training samples, training time, and model parameters
required to reach \(90\%\) of the EE achieved by Nest-Opt. The results show that all learning-based methods substantially reduce inference latency compared with the numerical baseline. Among the learning-based methods, the proposed method requires the fewest training samples and shortest training time. It exhibits slightly higher inference latency than the two variants and more model parameters than VAGG, due to the tow additional sub-networks in FGB-GNN.

\begin{table}[htbp]	
\centering 
\small
\caption{Inference Latency and Training Complexity}	
\begin{tabular}{c|c|c|c|c}
\hline
\multirow{2}{*}{Name} & \multirow{2}{*}{Inference Latency} & \multicolumn{3}{c}{Training Complexity} \\
\cline{3-5}
                            &                   & Sample   & Time      & Space \\ \hline             
\textbf{Proposed}           & 0.051 s           & 50       & 0.18 h    & 3.19 M    \\ 
\cline{1-5}
\textbf{VAGNN}              & 0.035 s           & 5000      & 0.37 h    & 1.48 M    \\ 
\cline{1-5}\textbf{VAFNN}   & 0.028 s           & 30 K     & 2.01 h    & 7.55 M    \\ 
\cline{1-5}
\textbf{Nest-Opt}           & 3.42 h           & --        & --       & --  \\ 
\hline
\end{tabular}
%\vspace{-0.5em}
\begin{minipage}{0.95\linewidth}
\quad\footnotesize \textit{Note:} K and M represent thousand and million, respectively.
\end{minipage}
\label{Complexity}
\end{table}

\section{Conclusion}
This paper investigated EE maximization in downlink multiuser CAPA systems, where the BS CAPA dimensions and the beamforming functions are jointly optimized. To solve this problem, we developed a cascaded network consisted of a GNN and FGB-INR. Both networks exploit the PE property of the optimal policy, and the update equations of FGB-INR are designed according to the functional gradient structure of the EE objective. Simulation results show that the proposed method achieves comparable EE with the numerical method with substantially reduced inference latency, and ablation study demonstrates that the functional-gradient structure in FGB-INR improves EE and reduces sample complexity and training time. 

\bibliographystyle{IEEEtran}
\bibliography{main}
\end{document}